\def\ifundefined{\@ifundefined}
\theoremstyle{plain}
\newtheorem{theorem}{\textbf{Theorem}}
\newtheorem{lemma}[theorem]{\textbf{Lemma}}
\newtheorem*{conjecture*}{\textbf{Conjecture}}
\def\D{\boldsymbol{D}}
\def\F{\boldsymbol{F}}
\def\G{\boldsymbol{G}}
\def\I{\boldsymbol{I}}
\def\L{\boldsymbol{L}}
\def\R{\boldsymbol{R}}
\def\T{\boldsymbol{T}}
\def\U{\boldsymbol{U}}
\def\V{\boldsymbol{V}}
\def\f{\boldsymbol{f}}
\def\p{\boldsymbol{p}}
\def\q{\boldsymbol{q}}
\def\r{\boldsymbol{r}}
\def\u{\boldsymbol{u}}
\def\v{\boldsymbol{v}}
\def\bGamma{\boldsymbol{\Gamma}}
\def\bPhi{\boldsymbol{\Phi}}
\def\bphi{\boldsymbol{\phi}}
\def\0{\boldsymbol{0}}
\def\1{\boldsymbol{1}}
\begin{document}

\title{Performance Analysis of Automotive SAR With Radar Based Motion Estimation} 
\author{
    Oded Bialer and Tom Tirer
    \thanks{Oded Bialer is with General Motors Advanced Technical Center, Herzliya 46733, Israel (e-mail: oded.bialer8@gmail.com). 
    Tom Tirer is with the Center for Data Science, New York University, New York, USA. This work was done when he was with General Motors Advanced Technical Center, Herzliya 46733, Israel (e-mail: tirer.tom@gmail.com).}
}

\maketitle
\begin{abstract}
Automotive synthetic aperture radar (SAR) can achieve a significant angular resolution enhancement for detecting static objects, which is essential for automated driving. Obtaining high resolution SAR images requires precise ego vehicle velocity estimation. A small velocity estimation error can result in a focused SAR image with objects at offset angles. In this paper, we consider an automotive SAR system that produces SAR images of static objects based on ego vehicle velocity estimation from the radar return signal without the overhead in complexity and cost of using an auxiliary global navigation satellite system (GNSS) and inertial measurement unit (IMU). We derive a novel analytical approximation for the automotive SAR angle estimation error variance when the velocity is estimated by the radar. 
The developed analytical analysis closely predicts the true SAR angle estimation variance, and also provides insights on the effects of the radar parameters and the environment condition on the automotive SAR angle estimation error. We evaluate via the analytical analysis and simulation tests the radar settings and environment condition in which the automotive SAR attains a significant performance gain over the angular resolution of the short aperture physical antenna array. We show that, perhaps surprisingly, when the velocity is estimated by the radar the performance advantage of automotive SAR is realized only in limited conditions. Hence since its implementation comes with an increase in computation and system complexity as well as an increase in the detection delay it should be used carefully and selectively.
\end{abstract}

\section{Introduction}\label{INTRO}
Automotive radar is a key sensor for automated driving due to its large detection range and robustness to weather and light conditions. The radar's angular resolution is given by the wavelength divided by the antenna array aperture (diffraction limit) \cite{VAN_TREES}. State-of-the-art automotive MIMO radars operate at 77GHz and have an aperture of about 10cm thus they  reach an angular resolution of $1^{\circ}$ \cite{MIMO_1}-\cite{MIMO_3}, which is significantly lower than the angular resolution of LIDAR and camera and is insufficient for automated driving.

Various signal processing methods for angle of arrival estimation that achieve higher resolution than the diffraction limit have been proposed in literature. Sub-space methods such as ESPRIT \cite{ESPRIT_REF}, MVDR \cite{MVDR_REF}, and MUSIC \cite{MUSIC_REF_1}-\cite{MUSIC_REF_2}, have shown to obtain super high angular resolution when there are multiple non-coherent observations from the same target location. However, in the automotive scenario, the targets are moving from one observation to the other, and hence their angle is changing. Sparsity methods \cite{SPARSITY_OMP_REF}-\cite{SPARSITY_LASSO_REF}, and approximate maximum likelihood methods \cite{EM_REF}-\cite{IQML_REF}, have also shown better AOA performance than the diffraction limit, when the targets are point reflectors, the number of targets is known (in the case of maximum likelihood estimation), the signal to noise ratio (SNR) is high, and the array manifold is known precisely. In practice, these assumptions do not hold in the automotive scenario. The targets have extended shape with multiple reflection points, the SNR of objects at distance is low, and the array manifold is not precisely known due to temperature variations. 

Recently, deep neural network (DNN) methods have also been applied for angle estimation \cite{DNN_AOA}-\cite{DNN_AOA_2}. These approaches require a significant overhead of producing a large amount of annotated data for training the DNN, and may be non-robust for changes in the sensor and the environment. Another recent approach for increasing the angular resolution of automotive radar is by extending the physical antenna array aperture \cite{SHRDR_REF}. The wide aperture solution comes with an increase in sensor cost and is a challenge to mount on the vehicle.

A different approach for increasing the automotive radar angular resolution is the synthetic aperture radar (SAR) \cite{SAR_TUTORIAL}. In SAR a large synthetic aperture is generated by the motion of the vehicle. The received signals along the trajectory of the vehicle are coherently combined and generate a very high resolution radar image of static objects, which is essential for automated driving in order to accurately detect the position and shape of static obstacles such as parking vehicles, poles, curbs, etc. However, successfully generating a high resolution SAR image for the static objects in a scene requires knowing  precisely the positions of the antenna elements along the motion of the vehicle. 

Hu et al. \cite{SAR_ANAL_1} have Analyzed the automotive SAR resolution and showed that when the ego-velocity is known the SAR can be useful for detecting obstacles in the squint forward direction. Gishkori et al. \cite{SAR_ALG_1} have developed a modified back-projection (BP) algorithm and a compressed sensing based BP algorithm, which is based on known vehicle motion.

Various algorithms for automotive SAR that achieve high resolution images of static objects and rely on auxiliary sensors (other than the radar) to obtain precise ego-motion of the vehicle have been proposed in the literature.
In \cite{SAR_BP_1} the BP algorithm was applied with ego vehicle velocity obtained from accurate odometry. The modified fast factorized BP algorithm with the use of Global Navigation Satellite System (GNSS), and Inertial Measurement Units (IMU) has been proposed in \cite{SAR_BP_2}. The IMU measures the vehicle force, and angular rate with  accelerometers and gyroscopes. The range migration algorithm (RMA) with velocity obtained from vehicle sensors has also been proposed \cite{SAR_RMA_1},\cite{SAR_RMA_2}.
In \cite{SAR_EFF_IMP_1} an efficient SAR implementation was introduces with GNSS and IMU sensors. A SAR operating at 300GHz with very high bandwidth of 40GHZ with velocity that is obtained from a GNSS and IMU unit showed very high resolution performance \cite{SAR_EFF_IMP_2}. 
A technique for fusing information from GNSS, IMU, odometers and steering angle sensors to support the automotive SAR system was introduces in \cite{SAR_EFF_IMP_3}.

Relying on auxiliary sensors to obtain the ego vehicle motion increases system cost, has a complexity  overhead of sensors synchronizations, and often is not robust when there is lack of GNSS receptions (e.g. parking lots or tunnels). Therefore, several SAR algorithms have been proposed that are based on the radar's self ego-velocity estimation. 
The ego vehicle velocity can be estimated from the Doppler and angle measurements that are  obtained from the radar's physical antenna array \cite{VEL_EST_1}-\cite{VEL_EST_3}.

Manzoni et al. \cite{SAR_RADAR_VEL_EST_1} have shown that automotive SAR with motion compensation by velocity estimation from the radar can reduce the residual error of the GNNS and IMU. The work in \cite{SAR_RADAR_VEL_EST_2} demonstrated that enhanced angular resolution is achievable with SAR BP algorithm that is based on velocity estimation with radar only.
Gao et al. \cite{SAR_TVT_REF} introduce an efficient implementation of the SAR BP algorithm for automotive radar, which is based on ego vehicle motion estimation from the radar returns. 
Gisder et al. \cite{SAR_WITH_RADAR_VEL_EST_3} have shown that when the velocity is estimated by the automotive radar then the residual velocity estimation errors result in angle offset of the objects in the SAR image. The SAR angular offset due to velocity error has been analyzed for airborne radars \cite{SAR_ERR_ANALYSIS_1}-\cite{SAR_ERR_ANALYSIS_4}. When the velocity estimation error is large then the objects in the SAR image are blurred (unfocused) with relatively low intensity, and also have an angle offset. On the other hand, when the velocity error is small the SAR image is focused with high intensity, but the objects have an angle offset.  
The angular error analysis in \cite{SAR_ERR_ANALYSIS_1}-\cite{SAR_ERR_ANALYSIS_4} considers the velocity error as a parameter that is typically given from the accuracy of the GNNS and IMU systems. Furthermore, the airborne platform velocity and the velocity estimation are assumed constant over the SAR coherent integration interval, and the geometry is of an airborne scenario (not an automotive scenario).

In this paper, we consider automotive SAR for imaging of static objects, with ego-velocity that is estimated only from the radar. We address the counteractive interplay between the angle and the radar's velocity estimation errors, which results in an angle estimation error of the automotive SAR. We derive a novel  analytical analysis for the automotive SAR angle estimation error when the velocity is estimated by the radar and varies during the coherent integration interval of the SAR. The main contribution of the paper is an analytical tool that is useful for accurately predict the SAR angle estimation error as a function of the automotive radar parameters, vehicle speed, and the reflections in the scenario. We make use of the developed analytical tool to evaluate the system parameters and the environment condition in which the automotive SAR attains a significant performance gain over the short aperture physical antenna array. Somewhat surprising, we show that when the velocity is estimate by the radar the performance advantage of automotive SAR is realized in rather limited conditions, and since its implementation comes with an increase in computation complexity, and increases the detection delay (due to longer integration interval), it should be used carefully and selectively.

The remaining of the paper is organized as follows. Section \ref{SYS_MODEL_SEC} describes the system model and provides background on the radar's velocity estimation and the automotive SAR processing. Section \ref{VEL_AZ_AMB_SEC} explains the SAR velocity-angle ambiguity issue. In Section \ref{PERFORMANCE_ANALY_SEC} we derive an analytical analysis for the automotive SAR performance. Section \ref{RES_SEC} corroborates our formulas with numerical experiments, and Section \ref{CONC_SEC} gives the paper conclusions. As for notations that are used in this paper, we use the tilde sign above a parameter to distinguish between its true value and its hypothesis value: for example $\theta$ is the true value of a parameter and $\tilde{\theta}$ is its hypothesis value. 

\section{System Model and Automotive SAR Background}\label{SYS_MODEL_SEC}
The automotive radar system that is considered in this paper is depicted at Fig.~\ref{SYS_MODEL}. The radar has a relative short physical antenna array with multiple transmit and receive antennas (MIMO radar) \cite{MIMO_2}. The transmitters transmit a sequence of FMCW fast chirps \cite{FAST_FMCW_1}-\cite{FAST_FMCW_2} at 77GHz carrier frequency. The received signal is down-converted by multiplying it with the transmitted sequence of chirps. 
The down-converted signal is partitioned into frames with short duration of about $20ms$. It is assumed that the frames are sufficiently short such that the velocity can be approximated as constant within each frame. Standard matched filter processing in range, Doppler and angle is applied per each frame and the output is a data cube per frame with dimensions of range, angle and Doppler. The matched filter processing includes FFT for each down-converted chirp (range stretch processing), FFT along chirps per each range bin (Doppler stretch processing), and beamforming per each range-Doppler bin \cite{MF_PROCESS_1}-\cite{MF_PROCESS_2}. 

When the vehicle is moving, the position of the radar changes along the frames and thus by coherently combining a sequence of frames (e.g. over $100ms-200ms$) an extended radar aperture (synthetic aperture) is created with potentially enhanced angular resolution.    
In this paper, we consider SAR processing only for reflections from static objects because the relative velocity estimation of static objects is significantly more accurate than for dynamic objects. The reason is that all the reflections from static objects have the same velocity relative to the radar, which is the negative ego vehicle velocity. Hence typically static objects provide a relatively larger number of reflection points with a large angular spread (compared to individual dynamic objects) that can be used for estimating their common velocity. 

We consider the radar ego vehicle velocity as randomly varying along a sequence of short frames (but constant within each frame). It is assumed that the ego-velocity is estimated in each radar frame only from the radar detected reflections from static objects, which are obtained from the matched filter output of each frame. 
Other auxiliary sensors such as accelerometer, gyroscope, GNNS, are not used for the velocity estimation. 
The radar detections in each frame are obtained by finding the high intensity peaks in the matched filter output, which are above the noise level threshold. The detections of static objects can be classified as the detections that are associated to the largest cluster of Doppler frequencies. 

A common approach for estimating the ego vehicle velocity in the $n^{th}$ frame, ${\v}_n$, from the radar detections of static objects, which has been used in recent works \cite{VEL_EST_1}-\cite{VEL_EST_3}, is to minimize the following least squares cost function w.r.t. the velocity hypothesis $\tilde{\v}_n$ 
\begin{align}\label{VEL_LS}
    Q(\tilde{\v}_n; \f_n, {\bphi}_n) = \frac{1}{2} \| \G(\bphi_n) \tilde{\v}_n - \f_n \|_2^2,
\end{align}
where $\f_n=[f_{n,1}, \ldots, f_{n,K}]^T$, $\bphi_n=[\phi_{n,1}, \ldots, \phi_{n,K}]^T$, are the Dopplers and angles, respectively, of the $K$ detected reflection points from static objects in the $n^{th}$ frame, and
\begin{align}
\label{Eq_G_def}
\G(\bphi_n) \triangleq \frac{2}{\lambda} \begin{bmatrix}
           \p^T(\phi_{n,1}) \\
           \vdots \\
           \p^T(\phi_{n,K})
         \end{bmatrix} =
         \frac{2}{\lambda}\begin{bmatrix}
           \sin{\phi_{n,1}} & \cos{\phi_{n,1}} \\
           \vdots & \vdots \\
           \sin{\phi_{n,K}} & \cos{\phi_{n,K}}
         \end{bmatrix}, 
\end{align}
where $\lambda$ is the wavelength and
$\p(\phi) \triangleq \begin{bmatrix}
           \sin{{\phi}} &
           \cos{{\phi}}
         \end{bmatrix}^T$ is a unit vector pointing in the direction from the boresight of the radar to a reflection point at angle $\phi$. 
Note that the angle measurements in \eqref{VEL_LS}, $\boldsymbol{\bphi}_n$, are obtained from the automotive radar's physical antenna array that has a short aperture and not from the extended synthetic aperture array that is obtained from the motion of the vehicle. Therefore, these angle measurements are relative inaccurate and, as we show in this paper, result in a velocity estimation error that degrades the SAR angle estimation performance.

The SAR high angular resolution output is obtained by coherently combining the matched filter outputs of $N$ consecutive frames for a given hypothesis of reflection point range and angle at the beginning of the integration interval, and a given velocity estimation for each frame. The reflection point range migration along the frames is determined from the velocity estimation of each frame. Given the velocity estimations and the initial range and angle hypothesis the reflection point range, Doppler and angle bin of each frame is computed and the matched filer outputs of these bins are coherently combined by compensating for the phase offsets due to the range migration along the frames. For more details on the SAR processing please refer to \cite{SAR_TVT_REF}. 

Next, we will formulate the SAR processing described above. Denote the reflection point true range and angle at the $n^{th}$ frame as $\gamma_n$ and $\theta_n$, respectively, and the hypothesis of reflection point range and angle at frame $n$ as $\tilde{\gamma}_n$ and $\tilde{\theta}_n$, respectively. Denote the true reflection point velocity vector and estimated velocity vector at the $n^{th}$ frame as $\v_n$ and $\tilde{\v}_n$, respectively.
Let $\V \triangleq \begin{bmatrix}
           \v_{0}^T \\
           \vdots \\
           \v_{N-1}^T
         \end{bmatrix} \in \mathbb{R}^{N \times 2}$ and $\tilde{\V} \triangleq \begin{bmatrix}
           \tilde{\v}_{0}^T \\
           \vdots \\
           \tilde{\v}_{N-1}^T
         \end{bmatrix}$ be matrices that encapsulate the ground truth and estimated velocities in all of the frames, respectively.
To simplify the notation throughout the paper we denote the true reflection point range and angle at the beginning of the integration interval as $\gamma$, and $\theta$, respectively, and the hypothesis of the reflection point range and angle at the beginning of the integration interval as $\tilde{\gamma}$ and $\tilde{\theta}$, respectively, i.e., $\theta \triangleq \theta_0$, $\gamma \triangleq \gamma_0$, $\tilde{\theta} \triangleq \tilde{\theta}_0$, and $\tilde{\gamma} \triangleq \gamma_0$. The SAR coherent combining of the matched filter outputs of $N$ frames for the hypothesis $\tilde{\gamma},\tilde{\theta}$, and the velocity estimations $\tilde{v}_0,..,\tilde{v}_{N-1}$ is given by
\begin{equation} \label{SAR_OUT_5}
\mu(\tilde{\gamma},\tilde{\theta},\tilde{\V})
\propto \\
\left|\sum_{n=0}^{N-1}y_n(\tilde{\gamma}_n,\tilde{\theta}_n,\tilde{\v}_n)e^{\frac{-j4\pi}{\lambda}r_n(\tilde{\V},\tilde{\theta})}\right|,
\end{equation}
where $y_n(\tilde{\gamma}_n,\tilde{\theta}_n,\tilde{\v}_n)$ is the $n^{th}$ frame matched filter output at range $\tilde{\gamma}_n$, angle $\tilde{\theta}_n$, and velocity $\tilde{\v}_n$ (which determines the Doppler frequency), and
\begin{equation} \label{RANGE_FORMULA}
r_n(\tilde{\V},\tilde{\theta})=T_f\sum_{k=0}^{n}\boldsymbol{p}^T(\tilde{\theta})\tilde{\v}_k,
\end{equation}
is the reflection point range migration hypothesis at the $n^{th}$ frame with respect to the beginning of the integration interval, where $T_f$ is the frame duration, and $\boldsymbol{p}(\theta)=\left[ \begin{array}{cc} \sin(\theta) & \cos(\theta) \end{array} \right]^T$ is a unit vector pointing in the direction from the boresight of the radar in the beginning of the integration interval to the reflection point at angle $\theta$. The SAR processing in \eqref{SAR_OUT_5} is described in Fig.~\ref{SAR_OUT_5}. 

The SAR output image (in range-angle domain) is generated by estimating the velocity in each frame and then calculating \eqref{SAR_OUT_5} for different hypothesis of $\tilde{\gamma}$, and $\tilde{\theta}$ given the velocity estimations. The peaks of the SAR output image that are above a noise level threshold are detected as reflection points \cite{CFAR_REF_1}-\cite{CFAR_REF_3}. Note the difference between two angle notations: $\theta$ and $\phi_{n,i}$. We use the notation $\theta$ for a reflection point angle that is estimated by the SAR and $\tilde{\theta}$ for the SAR hypothesis angle, while the notation $\phi_{n,i}$ refers to the angle of a detection that is obtained from the radar physical antenna array (see Fig.\ref{SYS_MODEL}) and used in \eqref{VEL_LS} for the velocity estimation.  

Fig.~\ref{SAR_KNOWN_VEL} shows an example of a SAR output image (with intensity in dB scale) of a simulated target vehicle at $28m$ and angle of $6^{\circ}$. The vehicle is illustrated in the image with white dotes. The black asterisks in the image are the simulated reflection points from the vehicle. The target vehicle was static, and the radar host vehicle was driving straight at zero angle with speed of $10m/s$. The velocity was perfectly known, and the SAR integration was over 15 frames of $20ms$ (i.e. the total integration time was $300ms$).

For reference we present in Fig.~\ref{SINGLE_FRMAE_MF} the reflection intensity image that is obtained for the same vehicle as in Fig.~\ref{SAR_KNOWN_VEL} but this time with conventional matched filter processing of a single frame with duration of $20ms$ (the image shows the output matched filter in range-azimuth with maximum over Doppler domain). In Fig.~\ref{SINGLE_FRMAE_MF} the angular resolution was $10^{\circ}$, which was obtained from the physical antenna array aperture of $2.2cm$. In Fig.~\ref{SAR_KNOWN_VEL} the angular resolution was obtained from a synthetically extended array of $3m$. By comparing  Fig.~\ref{SAR_KNOWN_VEL} with Fig.~\ref{SINGLE_FRMAE_MF} it is realized that the SAR achieves a significant improvement in the angular resolution compared to the physical antenna array resolution. However, recall that in Fig.~\ref{SAR_KNOWN_VEL} the velocity was perfectly known. When the velocity is unknown, the error in the velocity estimation causes a significant degradation in the SAR angular accuracy, 
as will be discussed and analyzed in the following sections.

\begin{figure}
  \centering
  \begin{subfigure}[b]{1\linewidth}
    \centering\includegraphics[width=280pt]{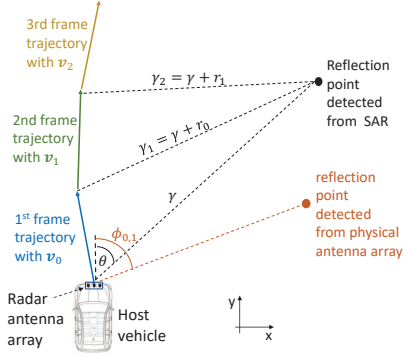}  
  \end{subfigure}
  \caption{Illustration of a synthetic aperture that is generating from three frames with velocities $\v_1, \v_2$ and $\v_3$. The SAR coherent integration is performed for the reflection point at angle $\theta$ and range $\gamma$ with respect to the beginning of the integration interval (black point). The velocity in each frame is estimated from the reflection points that are detected by the physical antenna array (e.g. brown point at angle $\phi_1$).}
\label{SYS_MODEL}
\end{figure}
\begin{figure}
  \centering
  \begin{subfigure}[b]{1\linewidth}
    \centering\includegraphics[width=280pt]{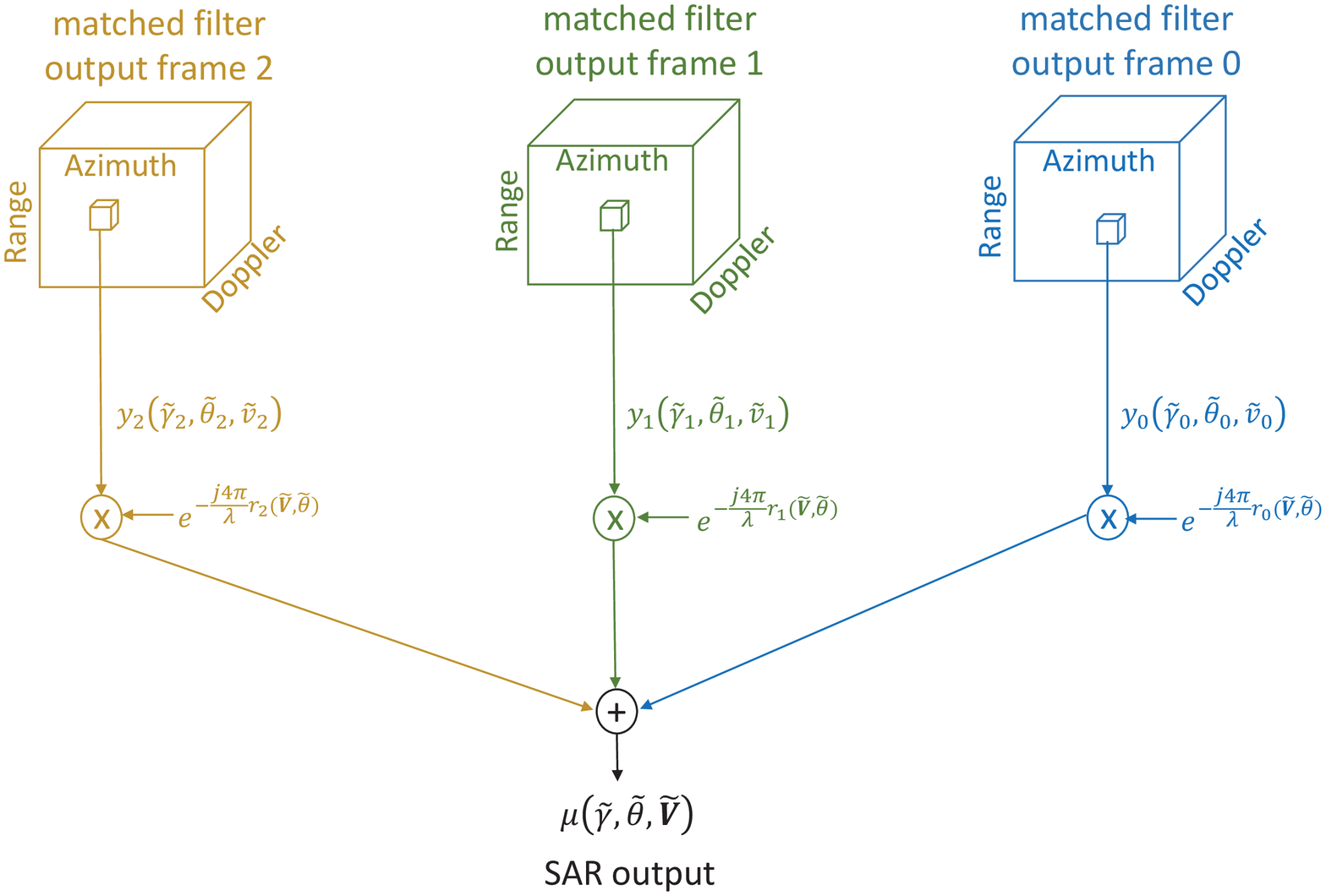}    
  \end{subfigure}
  \caption{Diagram of SAR coherent integration for three frames. A matched filter range-Doppler-angle data cube for each frame is computed. The matched filer outputs corresponding to the target's range, angle and velocity hypothesis of each frame are extracted and coherently combined by compensating for the phase shift along frames that is due to the target range migration hypothesis $r_n(\tilde{\V},\tilde{\theta})$.}
\label{SAR_PROCESSING_DIAGRAM}
\end{figure}

\begin{figure}
  \centering
  \begin{subfigure}[b]{1\linewidth}
    \centering\includegraphics[width=260pt]{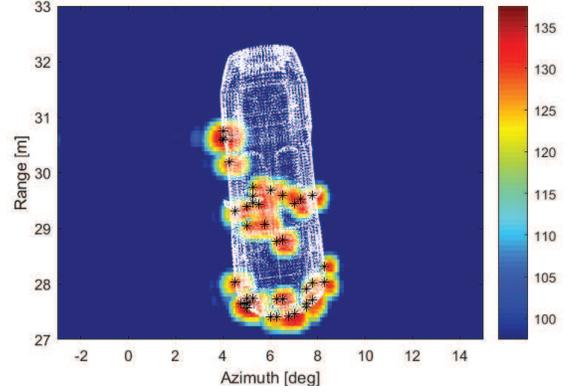}    
  \end{subfigure}
  \caption{Image of SAR output intensity in dB for integration duration of $300ms$ and known velocity of $10m/s$. The white dots illustrate the vehicle target and the black asterisk are the simulated reflection points from the vehicle.}
\label{SAR_KNOWN_VEL}
\end{figure}
\begin{figure}
  \centering
  \begin{subfigure}[b]{1\linewidth}
    \centering\includegraphics[width=260pt]{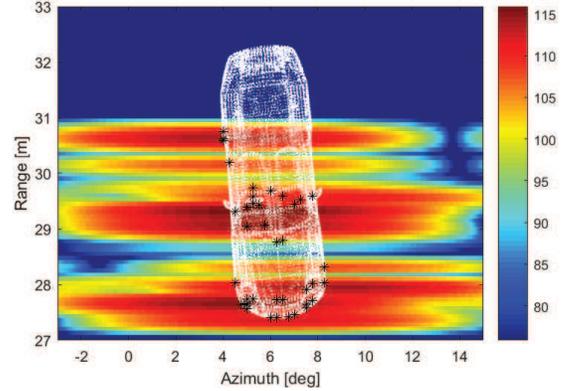}   
  \end{subfigure}
  \caption{Image of matched filter output intensity in dB (with maximum over Doppler) of a single frame with duration of $20ms$ and the same simulated vehicle as in Fig.~\ref{SAR_KNOWN_VEL}.}
\label{SINGLE_FRMAE_MF}
\end{figure}

\section{SAR Velocity-Angle Ambiguity}\label{VEL_AZ_AMB_SEC}
A fundamental issue in SAR is that a small error in the velocity estimation results in high intensity in the SAR output at a wrong reflection angle. 
Fig.~\ref{SAR_UNKNOWN_VEL} shows the SAR output image for the same case as in Fig.~\ref{SAR_KNOWN_VEL}, except that there was a very small velocity error of $0.03m/s$ (in Fig.~\ref{SAR_KNOWN_VEL} there was no velocity error). The images in Fig.~\ref{SAR_KNOWN_VEL} and Fig.~\ref{SAR_UNKNOWN_VEL} both seem focused with high intensity. However, the high reflection intensity peaks in Fig.~\ref{SAR_KNOWN_VEL} are at the angles of the correct reflection points (the black asterisks), while the reflection peaks in Fig.~\ref{SAR_UNKNOWN_VEL} have an angular offset due to the velocity error. Furthermore, it is realized that it is difficult to identify in which one of the images the reflection points are at the correct angles without knowing the ground truth. Therefore, auto-focusing algorithms \cite{SAR_AUTOFOCUS_1}-\cite{SAR_AUTOFOCUS_3} will be unsuccessful in this case since both images are focused.

Let us understand the SAR angular error phenomena from the SAR coherent combining formula in \eqref{SAR_OUT_5}. When the hypotheses of the reflection point range and angle at the beginning of the integration interval $\tilde{\gamma}, \tilde{\theta}$ have a large offset from the true range and angle (large offset with respect to the range and angle resolution of the individual
frame) and/or the velocity estimation error is large, then the intensity of the matched filter outputs, 
$y_n(\tilde{\gamma}_n,\tilde{\theta}_n,\tilde{\v}_n)$, is significantly lower than the matched filter outputs at the true reflection point parameters, $y_n(\gamma_n,\theta_n,\v_n)$. As a result, the SAR output in \eqref{SAR_OUT_5} has relatively low intensity. In this case, the SAR image can be distinguished as unfocused (smeared with low intensity) and the velocity estimation could be corrected until the image is focused (but the objects are not necessarily at the correct angle as explained next). On the other hand, when the reflection point range and angle hypotheses, $\tilde{\gamma}, \tilde{\theta}$ have a small offset from the true reflection point range and angle,$\gamma, \theta$, and also the velocity estimation error is small then the frames' matched filter outputs of the hypotheses  $\tilde{\gamma},\tilde{\theta},\tilde{\v}_n$ ($n=0,1,..,N-1$) are approximately equal to the matched filter outputs of the true reflection point
parameters, $\gamma,\theta,\v_n$, (because the errors in the reflection point range, angle and velocity are insignificant with respect to the individual frame resolution), i.e.,  
\begin{equation} \label{MF_APPROX}
y_n(\tilde{\gamma}_n,\tilde{\theta}_n,\tilde{\v}_n)\approx y_n(\gamma,\theta,\v_n).
\end{equation}
We can express the match filter output for the true reflection point parameters at frame $n>0$ as a function of the matched filter output at the first, $n=0$, as follows
\begin{equation} \label{MF_FUNC_OF_1ST_FRAME}
y_n(\gamma_n,\theta_n,\v_n) = y_0(\gamma,\theta,\v_0)e^{\frac{j4\pi}{\lambda}r_n(\V,\theta)},
\end{equation}
where 
\begin{equation} \label{RANGE_TRUE}
r_n(\V,\theta)=T_f\sum_{k=0}^{n}\boldsymbol{p}^T(\theta)\v_k,
\end{equation}
is the reflection point range migration at the $n^{th}$ frame with respect to the first frame. 
Substituting \eqref{MF_FUNC_OF_1ST_FRAME} into \eqref{MF_APPROX} we have that 
\begin{equation} \label{SMALL_VEL}
y_n(\tilde{\gamma}_n,\tilde{\theta}_n,\tilde{\v}_n)\approx y_0(\gamma,\theta,\v_0)e^{\frac{j4\pi}{\lambda}r_n(\V,\theta)},
\end{equation}
and by substituting \eqref{SMALL_VEL} into \eqref{SAR_OUT_5} we obtain that 
\begin{multline} \label{SAR_OUT_2}
\mu(\tilde{\gamma},\tilde{\theta},\tilde{\V})
\propto \\
\left|y_0(\gamma,\theta,\v_0)\sum_{n=0}^{N-1}e^{\frac{-j4\pi}{\lambda}\left(r_n(\tilde{\V},\tilde{\theta})-r_n(\V,\theta)\right)}\right| \propto \\
\left|\sum_{n=0}^{N-1}e^{\frac{-j4\pi}{\lambda}\left(r_n(\tilde{\V},\tilde{\theta})-r_n(\V,\theta)\right)}\right|.
\end{multline}
From \eqref{SAR_OUT_2} it is realized that the SAR output has high intensity when the difference between the reflection point true range migration and hypothesis range migration, $r_n(\tilde{\V},\tilde{\theta})-r_n(\V,\theta)$, is small. This difference is equal to zero when the reflection point angle hypothesis and velocity estimation are equal to the ground truth, i.e., $\tilde{\theta}=\theta$, and $\tilde{\v}_n=\v_n$. However, the range difference may also be close to zero when there is an error in the velocity estimation, i.e., $\tilde{\v}_n \ne \v_n$. From \eqref{RANGE_FORMULA} it can be realized that when $\tilde{\theta}=\theta$ then an error in the velocity causes a range offset. However, this offset can be eliminated by introducing an offset in $\tilde{\theta}$ with respect to $\theta$. Thus, when there is a small velocity error then \eqref{SAR_OUT_2} has high intensity at an angle offset from the true reflection point angle. 

Figs.~\ref{SAR_KNOWN_VEL}-\ref{SAR_UNKNOWN_VEL} show that the SAR can theoretically achieve very high angular resolution and accuracy when the velocity is known. However, when the velocity is estimated by the radar then even a small velocity estimation error can  result in a significant SAR angular error. In the next section we derive an analytical analysis for the SAR angular estimation error when the ego vehicle velocity is estimated by the radar. 

\begin{figure}
  \centering
  \begin{subfigure}[b]{1\linewidth}
    \centering\includegraphics[width=260pt]{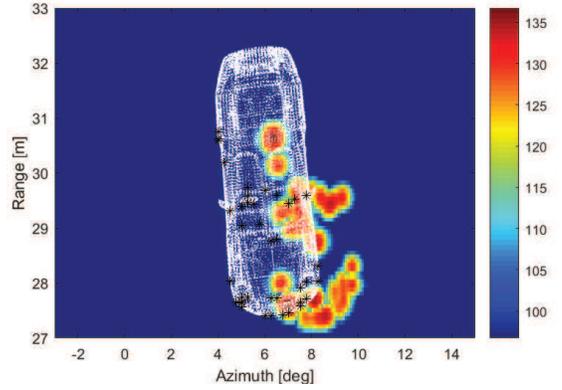}    
  \end{subfigure}
  \caption{Image of SAR output intensity in dB for the same simulation as in Fig.~\ref{SAR_KNOWN_VEL} except that there is a small velocity error of $0.03m/s$.}
\label{SAR_UNKNOWN_VEL}
\end{figure}

\section{Analytical Analysis of SAR Angular Error}\label{PERFORMANCE_ANALY_SEC}

In this section we derive an analytical analysis for the SAR angular estimation performance when the velocity is estimated from the radar detections (that are obtained from the physical antenna array). 
This 
section is organized as follows. In Section~\ref{sec:analysis1}, we analytically link the angular estimation error of the SAR to the velocity estimation error via Taylor expansion under an ambiguity assumption. We then translate the obtained formula to a relation between the angle error covariance and the velocity error covariance. In Section~\ref{sec:analysis2} we derive an analytical expression for the velocity error covariance. 
In Section~\ref{ANALYSIS_IMPLICATIONS} we formulate a full analytical expression for the SAR angle estimation error (appears in \eqref{SAR_ANGLE_ERR_FULL}) by integrating the results of the two preceding subsections.
Furthermore, in Section~\ref{ANALYSIS_IMPLICATIONS} we reach a 
significantly simplified SAR angular error formula (appears in \eqref{Eq_asymp_approx3}) under two additional assumptions on the setting.
From this formula we gain insights on the effects of the system parameters on 
the performance.

\subsection{Linking Angle and Velocity Error Covariances}
\label{sec:analysis1}

Let us start by highlighting the key idea of the analysis. 
Inspecting \eqref{SAR_OUT_2}, we see that 
\begin{multline}
    \left|\sum_{n=0}^{N-1}e^{\frac{-j4\pi}{\lambda}(r_n(\tilde{\V},\tilde{\theta})-r_n(\V,\theta))}\right| = \\
    \left| \sum_{n=0}^{N-1}e^{\frac{-j4\pi}{\lambda}(r_n(\tilde{\V},\tilde{\theta})-r_n(\V,\theta) - \alpha)}\right|
\end{multline}
for any $\alpha \in \mathbb{R}$.
This implies that an ambiguity occurs when the range measurements associated with the true velocities and angle, $\r(\V,\theta)=[r_0(\V,\theta),..,r_{N-1}(\V,\theta)]^T$, are (nearly) the same as those that match a different pair, $\r(\tilde{\V},\tilde{\theta})=[r_0(\tilde{\V},\tilde{\theta}),..,r_{N-1}(\tilde{\V},\tilde{\theta})]^T$, 
up to having a different mean (i.e., a constant phase offset, $\alpha$). 
Formally, an ambiguity is resulted from
\begin{align}
\label{Eq_ambiguity_cond}
    \bPhi_N \r(\tilde{\V},\tilde{\theta}) - \bPhi_N \r(\V,\theta) \approx \0,
\end{align}
where $\bPhi_N \in \mathbb{R}^{N \times N}$ is a projection matrix (of rank $N-1$) that subtracts the mean from any $N \times 1$ vector on which it is applied
$$
\bPhi_N \triangleq \I_N - \frac{1}{N}\1_N\1_N^T.
$$
Recall from Section \ref{VEL_AZ_AMB_SEC} that the SAR angle estimation error is caused when there is a small range offset with respect to the ground truth range that is due to a small velocity error. This motivates us to base our analysis on the first order Taylor expansion of $\r(\tilde{\V},\tilde{\theta})$ around the true $\r(\V,\theta)$. 

The first order Taylor expansion of $\r(\tilde{\V},\tilde{\theta})$ around the true $\r(\V,\theta)$ is given by
\begin{multline}
\label{Eq_analysis_taylor}
    \r(\tilde{\V},\tilde{\theta}) \approx \\
    \r(\V,\theta) + 
             \begin{bmatrix}
           \mathrm{Tr}\left ( \frac{\partial}{\partial \V^T} r_0(\V,\theta) \delta{\V} \right ) \\
           \vdots \\
           \mathrm{Tr}\left ( \frac{\partial}{\partial \V^T} r_{N-1}(\V,\theta) \delta{\V} \right )
         \end{bmatrix}
    + \frac{\partial}{\partial \theta} \r(\V,\theta) \delta \theta,
\end{multline}
where $\delta{\V} = \tilde{\V} - \V$ and $\delta \theta = \tilde{\theta} - \theta$ denote the velocity and angle errors, respectively.
Substituting \eqref{Eq_analysis_taylor} in \eqref{Eq_ambiguity_cond} yields
\begin{align}
\label{Eq_analysis_taylor2}
    \0 \approx  
             \bPhi_N \begin{bmatrix}
           \mathrm{Tr}\left ( \frac{\partial}{\partial \V^T} r_0(\V,\theta) \delta{\V} \right ) \\
           \vdots \\
           \mathrm{Tr}\left ( \frac{\partial}{\partial \V^T} r_{N-1}(\V,\theta) \delta{\V} \right )
         \end{bmatrix}
    + \bPhi_N \frac{\partial}{\partial \theta} \r(\V,\theta) \delta \theta.
\end{align}
Equation \eqref{Eq_analysis_taylor2} further displays the interplay between the errors in case of ambiguity: an error component in a certain range direction which is due to inaccurate velocity is ``compensated"/``masked" by an error component in the opposite range direction which is due to inaccurate angle.

We turn now to compute the derivatives that appear in \eqref{Eq_analysis_taylor2}.
First, based on \eqref{RANGE_FORMULA}, let us represent the vector of $N$ ground truth range offsets as 
\begin{equation} \label{RANGE_GT_VEC}
\boldsymbol{r}(\boldsymbol{V},\theta) \triangleq \left[\begin{array}{c} r_0(\V,\theta) \\ : \\ r_{N-1}(\V,\theta) \end{array}\right]=T_f\L_N \V\p(\theta),
\end{equation}
where $\L_N$ is a $N \times N$ lower triangular matrix of ones, e.g., $\L_3 = \begin{bmatrix} 1&0&0 \\ 1&1&0 \\ 1&1&1 \end{bmatrix} $. 
With these notations, we compute the derivatives as follows
\begin{align}
\label{Eq_analysis_der_az}
    \frac{\partial}{\partial \theta} \r(\V,\theta) &= T_f \L_N \V \p'(\theta)
    \\
\label{Eq_analysis_der_vel}
    \frac{\partial}{\partial \V^T} r_n(\V,\theta) &= T_f \frac{\partial}{\partial \V^T} \left ( \L_N[n,:] \V           
         \p(\theta)
         \right ) \\ \nonumber
         &= T_f 
         \p(\theta) \L_N[n,:]
\end{align}
where $\p'(\theta)=\begin{bmatrix}
           \cos{{\theta}} \\
           -\sin{{\theta}}
         \end{bmatrix}$
and $\L_N[n,:] \in \mathbb{R}^{1 \times N}$ denotes the $n$-th row of $\L_N$. 
By substituting \eqref{Eq_analysis_der_az} and \eqref{Eq_analysis_der_vel} in \eqref{Eq_analysis_taylor2}, and making use of the identity 
\begin{multline}
\label{Eq_analysis_taylor_vel}
             \begin{bmatrix}
           \mathrm{Tr}\left ( 
         \p(\theta) \L_N[1,:] \delta{\V} \right ) \\
           \vdots \\
           \mathrm{Tr}\left ( 
         \p(\theta) \L_N[N,:] \delta{\V} \right )
         \end{bmatrix}= \\
             \begin{bmatrix}
           \L_N[1,:] \delta{\V} 
         \p(\theta) \\
           \vdots \\
           \L_N[N,:] \delta{\V} 
         \p(\theta) 
         \end{bmatrix} 
= \L_N \delta{\V} \p(\theta),
\end{multline}
we arrive at
\begin{align}
\label{Eq_analysis_taylor3}
     \bPhi_N \L_N \V
         \p'(\theta) \delta \theta
&\approx
- \bPhi_N \L_N \delta{\V} \p(\theta)
         \\ \nonumber
&= - \bPhi_N \L_N \left (  
         \p(\theta)^T \otimes \I_N \right ) \delta \v_{1:N} 
\end{align}
where $\delta \v_{1:N} \triangleq \mathrm{vec}(\delta{\V}) \in \mathbb{R}^{2N \time 1}$ and
$\otimes$ denotes the Kronecker product.
Defining the $K \times 1$ vector
\begin{align}
    \label{Eq_analysis_u_vec}
    \u \triangleq \bPhi_N \L_N \V \p'(\theta),
\end{align}
we can isolate $\delta \theta$ by multiplying both sides of \eqref{Eq_analysis_taylor3} from left by $\u^T/\|\u\|_2^2$. This yields
\begin{align}
    \label{Eq_analysis_az_vel}
\delta \theta \approx - \frac{1}{\|\u\|_2^2} \u^T \L_N \left (  
         \p(\theta)^T \otimes \I_N \right ) \delta \v_{1:N},
\end{align}
where we used the fact that $\bPhi_N^T \bPhi_N = \bPhi_N \bPhi_N = \bPhi_N$, and thus $\u^T \bPhi_N \L_N = \u^T \L_N$.

Finally, we obtain (under zero mean error assumption) our key formula for the relation between the variance of the angle error, $\mathrm{var}(\delta \theta)$, and the velocity error covariance matrix, $\mathrm{Cov}(\delta \v_{1:N})$, which is given by 
\begin{align}
    \label{Eq_analysis_az_vel_var}
\mathrm{var}(\delta \theta) &\approx \frac{1}{\|\u\|_2^4} \u^T \L_N \left (  
         \p(\theta)^T \otimes \I_N \right ) \times \\ \nonumber
&\hspace{25mm}         
         \mathrm{Cov}(\delta \v_{1:N}) \left (  
         \p(\theta) \otimes \I_N \right ) \L_N^T \u.
\end{align}
We assume that the velocity estimation errors are i.i.d. over different frames, and in this case we have that $\mathrm{Cov}(\delta \v_{1:N}) = \mathrm{Cov}(\delta\v) \otimes \I_N$, where $\mathrm{Cov}(\delta\v)$ is a $2\times2$ covariance matrix of the velocity estimation error in a single frame. Consequently, 
\begin{align}
    \label{Eq_analysis_az_vel_var_simp}
\mathrm{var}(\delta \theta) &\approx \frac{\u^T \L_N \L_N^T \u}{\|\u\|_2^4}  
\times
\left (  
         \p(\theta)^T \mathrm{Cov}(\delta\v) \p(\theta)
         \right ).
\end{align}


In \eqref{Eq_analysis_az_vel_var_simp} we have arrived at a formula for the SAR angle estimation error as a function of the velocity estimation error covariance. In the next section we derive an analytical expression for the radar's velocity estimation error covariance, $\mathrm{Cov}(\delta\v)$.

\subsection{Velocity Estimation Error Covariance}
\label{sec:analysis2}

In this subsection, we derive a small error analysis for the per frame velocity estimation. Recall that the velocity estimation of the radar is based on minimizing the least squares objective $Q(\tilde{\v}; \tilde{\f},\tilde{\bphi})$, which is define in \eqref{VEL_LS}, with respect to $\tilde{\v}$ \cite{VEL_EST_1}-\cite{VEL_EST_3} (see Section \ref{SYS_MODEL_SEC}).
Note that in order to simplify the notations, in this subsection we omit the frame index $n$. 
Observe that the least squares objective is formulated using $\{\tilde{\phi}_i\}_{i=1}^K$ and $\{\tilde{f}_i\}_{i=1}^K$, which are 
the noisy angle and Doppler measurements of $K$ targets, respectively.
It is assumed that the reflection points angle measurements errors are i.i.d. with variance $\sigma_\phi^2$ and their Doppler frequency measurements are also i.i.d. with variance $\sigma_f^2$.

Let us denote 
the gradient of $Q(\tilde{\v}; \tilde{\f}, \tilde{\bphi})$ (w.r.t. $\tilde{\v}$) by $\q(\tilde{\v}; \tilde{\f}, \tilde{\bphi}) \in \mathbb{R}^{2 \times 1}$. 
The small error analysis utilizes two facts. The first fact is that the estimated velocity, $\hat{\v}$, is a minimizer of $Q(\tilde{\v}; \tilde{\f}, \tilde{\bphi})$ and thus $\q(\hat{\v}; \tilde{\f}, \tilde{\bphi})=\0$. The second fact is that the true velocity, $\v$, is a minimizer of $Q(\v; \f, \bphi)$ and thus $\q(\v; \f, \bphi)=\0$, where $\f$ and $\bphi$ denote the $K \times 1$ vectors of the true Doppler frequencies and direction of arrivals.
We use these properties in the following Taylor series analysis.

The first order Taylor expansion of $\q(\hat{\v}; \tilde{\f}, \tilde{\bphi})$ around $(\v, \f, \bphi)$ is given by
\begin{align}
\label{Eq_small_err_v}
    \q(\hat{\v}; \tilde{\f}, \tilde{\bphi}) &\approx \q(\v; \f, \bphi) + \frac{\partial}{\partial \v}\q(\v; \f, \bphi) \delta\v \\ \nonumber 
    &+ \frac{\partial}{\partial \bphi}\q(\v; \f, \bphi) \delta\bphi + \frac{\partial}{\partial \f}\q(\v; \f, \bphi) \delta\f,
\end{align}
where $\delta\v = \hat{\v} - \v$, $\delta\bphi = \tilde{\bphi} - \bphi$ and $\delta\f = \tilde{\f} - \f$,  
and the approximation is expected to be rather tight in the small error regime (when $\| \delta\v \|_2$, $\| \delta\bphi \|_2$ and $\| \delta\f \|_2$ are small).
Now, observe that from $\q(\hat{\v}; \tilde{\f}, \tilde{\bphi})=\0$ and $\q(\v; \f, \bphi)=\0$ we get
\begin{align}
\label{Eq_small_err_v2}
    -\frac{\partial}{\partial \v}\q(\v; \f, \bphi) \delta\v \approx \frac{\partial}{\partial \bphi}\q(\v; \f, \bphi) \delta\bphi + \frac{\partial}{\partial \f}\q(\v; \f, \bphi) \delta\f.
\end{align}
Denote the derivatives $\frac{\partial}{\partial \v}\q(\v; \f, \bphi)$, $\frac{\partial}{\partial \bphi}\q(\v; \f, \bphi)$ and $\frac{\partial}{\partial \f}\q(\v; \f, \bphi)$ by $\U, \T$ and $\F$, respectively.
Under the reasonable assumptions that the errors $\delta \bphi$ and $\delta\f$ are uncorrelated 
and have zero means we have that
\begin{align}
\label{Eq_small_err_v4}
    \U \mathrm{Cov}(\hat{\v}) \U^T \approx \T \mathrm{Cov}(\tilde{\bphi}) \T^T + \F \mathrm{Cov}(\tilde{\f}) \F^T.
\end{align}
For invertible $\U$ (we elaborate on this below), substituting $\mathrm{Cov}(\tilde{\bphi}) = \sigma_{\phi}^2 \I_K$ and $\mathrm{Cov}(\tilde{\f}) = \sigma_f^2 \I_K$ (the i.i.d. measurements assumption) yields
\begin{align}
\label{Eq_small_err_v5}
     \mathrm{Cov}(\hat{\v}) \approx \sigma_{\phi}^2 \U^{-1} \T  \T^T \U^{-T} + \sigma_f^2 \U^{-1} \F \F^T \U^{-T}.
\end{align}

We turn now to compute the required derivatives:
\begin{align}
\label{Eq_small_err_derivatives}
    &\q(\v; \f, \bphi) = 
    \nabla_{\v} Q(\v; \f, \bphi) = \G(\bphi)^T ( \G(\bphi)\v - \f ), \\ 
    \label{Eq_small_err_derivatives_U}
    &\U = \frac{\partial}{\partial \v}\q(\v; \f, \bphi) = \G(\bphi)^T \G(\bphi), \\ 
    \label{Eq_small_err_derivatives_T}
    &\T[:,i] = \frac{\partial}{\partial \phi_i}\q(\v; \f, \bphi) = \frac{\partial}{\partial \phi_i} \left ( \frac{2}{\lambda} \p(\phi_i)( \frac{2}{\lambda} \p^T(\phi_i)\v - f_i ) \right )  \nonumber \\
    &\hspace{8mm} = \frac{2}{\lambda} \p'(\phi_i)( \frac{2}{\lambda} \p^T(\phi_i)\v - f_i ) + \frac{4}{\lambda^2} \p(\phi_i) \p'^T(\phi_i)\v  \nonumber \\
    &\hspace{8mm} = \frac{4}{\lambda^2} \p(\phi_i) \p'^T(\phi_i)\v,
    \\    
    &\F = \frac{\partial}{\partial \f}\q(\v; \f, \bphi) = - \G(\bphi)^T,   
\end{align}
where we used the definition of $\G(\bphi) \in \mathbb{R}^{K \times 2}$ that appears in \eqref{Eq_G_def}, 
as well as $\p^T(\phi_i) = [\sin{\phi_i}, \cos{\phi_i}]$ and $\p'^T(\phi_i) = [\cos{\phi_i}, -\sin{\phi_i}]$. 
As for the invertibility of $\U$, 
from \eqref{Eq_small_err_derivatives_U} we see that this merely requires having at least two reflections in distinct angles.

Using these derivatives, 
the expression in \eqref{Eq_small_err_v5} for the velocity estimation error covariance can be more concretely expressed as
\begin{align}
\label{Eq_small_err_v5_simp}
     &\mathrm{Cov}(\hat{\v}) \approx \sigma_f^2 \boldsymbol{\Gamma} + \sigma_{\phi}^2 \boldsymbol{\Gamma} \G(\bphi)^T \D^2(\v,\bphi) \G(\bphi) \boldsymbol{\Gamma},
\end{align}
where 
\begin{equation} \label{GAMMA_EQ}
\boldsymbol{\Gamma}=\left ( \G(\bphi)^T \G(\bphi) \right )^{-1},
\end{equation}
and
\begin{align}
\label{Eq_small_err_v5_simp_def}
     \D(\v,\bphi) = \frac{2}{\lambda} \mathrm{diag}\{ \p'^T(\phi_1)\v, \ldots, \p'^T(\phi_K)\v \}
\end{align}
is a $K \times K$ diagonal matrix composed of the tangential component of the velocity w.r.t.~each of the targets.
In \eqref{Eq_small_err_v5_simp} we used $\T = \G^T(\bphi) \D(\v,\bphi)$.

We conclude this subsection with the novel full analytical expression for the SAR angle estimation error, which is presented in \eqref{SAR_ANGLE_ERR_FULL} at the bottom of the next page.
This formula is obtained by substituting \eqref{Eq_small_err_v5_simp} into \eqref{Eq_analysis_az_vel_var_simp}.
\begin{figure*}[b]
\dotfill 
\begin{align} 
\label{SAR_ANGLE_ERR_FULL}
&\mathrm{var}(\delta \theta)\approx
\frac{\u^T \L_N \L_N^T \u}{\|\u\|_2^4} 
\,\p(\theta)^T \Bigl( \sigma_f^2 \boldsymbol{\Gamma}(\bphi) + \sigma_{\phi}^2 \boldsymbol{\Gamma}(\bphi) \G(\bphi)^T \D^2(\v,\bphi) \G(\bphi) \boldsymbol{\Gamma}(\bphi) \Bigr) \p(\theta)
\end{align}
\end{figure*}

\subsection{Insights From the Analysis}\label{ANALYSIS_IMPLICATIONS}

In what follows, we obtain insights from the analytical expression in \eqref{SAR_ANGLE_ERR_FULL}. 
To this end, 
we make a couple of simplifying assumptions: 1) constant velocity along the $N$ frames in the integration interval; and 2) Large number of targets that are uniformly spread in the interval $[-\pi/2,\pi/2]$.
These assumption are used to simplify the first and second terms in \eqref{SAR_ANGLE_ERR_FULL}, respectively.

We begin with
approximating the velocity as constant during the integration interval, i.e., $\v_1=\v_2=...=\v_N=\v$ and equivalently $\V= \1_N \otimes \v^T$. In practice, the velocity 
does not vary significantly over the SAR integration interval.
Thus, assuming that the velocity is fixed and equal to the average velocity does not significantly violate the actual physical problem. 

Recalling the definition of $\u$ in \eqref{Eq_analysis_u_vec}, 
under the constant velocity assumption we have that
\begin{equation}
    \u = \bPhi_N \L_N (\1_N \otimes \v^T) \p'(\theta)= v_t(\theta) \bPhi_N \L_N \1_N,
\end{equation}
where $v_t(\theta) = \v^T \p'(\theta)
         = v_x \cos{{\theta}} - v_y \sin{{\theta}}$. That is, $v_t(\theta)$ is the tangential component of the true velocity w.r.t.~the target at $\theta$.
It follows that the first factor in \eqref{SAR_ANGLE_ERR_FULL} reads as
\begin{align}
\label{CONST_VEL_U_TILDE}
    \frac{\u^T \L_N \L_N^T \u}{\|\u\|_2^4} = \frac{1}{\omega(N) | v_t(\theta) |^2},
\end{align}
where $\omega(N) \triangleq \left ( \frac{\|\L_N^T \bPhi_N \L_N \1_N\|_2^2}{\|\bPhi_N \L_N \1_N\|_2^4} \right )^{-1}$. 
The following lemma shows that $\omega(N) = \Theta(N)$, i.e., there exist two strictly positive constants, $c_1$ and $c_2$, such that $c_1 N \leq \omega(N) \leq c_2 N$ (the highest order of $N$ in $\omega(N)$ is one).


\begin{lemma}
\label{lemma_1_over_N}
We have that $\|\L_N^T \bPhi_N \L_N \1_N\|_2^2 = \Theta(N^5)$ and $\|\bPhi_N \L_N \1_N\|_2^4 = \Theta(N^6)$. Therefore, 
$$
\left ( \frac{\|\L_N^T \bPhi_N \L_N \1_N\|_2^2}{\|\bPhi_N \L_N \1_N\|_2^4} \right )^{-1} = \Theta(N).
$$
\end{lemma}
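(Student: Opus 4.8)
The plan is to reduce everything to explicit one–dimensional power sums, exploiting that $\L_N$, $\L_N^T$ and $\bPhi_N$ all act very concretely on $\1_N$. First I would note that, since the $i$-th row of $\L_N$ contains exactly $i$ ones, $\L_N \1_N = \w$ where $\w \triangleq (1,2,\ldots,N)^T$. Applying $\bPhi_N$ merely subtracts the empirical mean of the entries of $\w$, namely $\tfrac1N \1_N^T \w = \tfrac{N+1}{2}$, so $\bPhi_N \L_N \1_N$ is the vector whose $k$-th entry is $k - \tfrac{N+1}{2}$.

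For the denominator this is already enough: $\|\bPhi_N \L_N \1_N\|_2^2 = \sum_{k=1}^N \bigl(k - \tfrac{N+1}{2}\bigr)^2 = \tfrac{N(N^2-1)}{12} = \Theta(N^3)$, and squaring gives $\|\bPhi_N \L_N \1_N\|_2^4 = \Theta(N^6)$. For the numerator I would apply $\L_N^T$, whose action on a vector $\z$ is the reverse partial sum $(\L_N^T \z)_i = \sum_{j=i}^N z_j$. Substituting $z_j = j - \tfrac{N+1}{2}$ and collapsing the resulting arithmetic series yields the closed form $(\L_N^T \bPhi_N \L_N \1_N)_i = \tfrac12 (i-1)(N+1-i)$ (a quick sanity check: this vanishes at $i=1$, as it must, since the fully centered sum is zero). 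Hence $\|\L_N^T \bPhi_N \L_N \1_N\|_2^2 = \tfrac14 \sum_{i=1}^N (i-1)^2 (N+1-i)^2 = \tfrac14 \sum_{m=0}^{N-1} m^2 (N-m)^2$, and I would argue this is $\Theta(N^5)$ by a simple sandwich: the upper bound $N \cdot N^2 \cdot N^2 = N^5$ is immediate, while restricting the sum to $m \in [N/4, 3N/4]$ leaves $\Theta(N)$ terms each at least a positive constant times $N^4$. (Alternatively, expanding $\sum_{m=0}^{N-1} m^2(N-m)^2 = N^2 \sum m^2 - 2N \sum m^3 + \sum m^4$ and invoking the standard power-sum formulas shows the $N^5$ coefficients combine to $\tfrac13 - \tfrac12 + \tfrac15 = \tfrac1{30} \neq 0$, so in fact the squared norm is $\tfrac{N^5}{120} + O(N^4)$.)

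Combining the two estimates gives $\Bigl( \|\L_N^T \bPhi_N \L_N \1_N\|_2^2 \,/\, \|\bPhi_N \L_N \1_N\|_2^4 \Bigr)^{-1} = \Theta(N^6)/\Theta(N^5) = \Theta(N)$, which is the claim. There is no conceptual obstacle here; the only thing requiring genuine care is the algebraic bookkeeping that produces the closed form $\tfrac12(i-1)(N+1-i)$ for the entries of $\L_N^T \bPhi_N \L_N \1_N$, and — should one take the exact-evaluation route instead of the sandwich — the verification that the leading $N^5$ contributions do not cancel.
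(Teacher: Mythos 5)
Your proof is correct and follows essentially the same route as the paper's: compute the entries of $\bPhi_N \L_N \1_N$ and $\L_N^T \bPhi_N \L_N \1_N$ explicitly, obtain the same closed form $\tfrac12(i-1)(N+1-i)$ for the latter, and reduce both norms to standard power sums, arriving at the same leading term $N^5/120$. Two minor points in your favor: your argument covers all $N$ uniformly (the paper carries out the computation only for odd $N$), and your sandwich bound yields $\Theta(N^5)$ without needing to verify that the leading power-sum coefficients do not cancel.
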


\begin{proof}
See Appendix~\ref{app_lemma}.
\end{proof}

By substituting \eqref{CONST_VEL_U_TILDE} into \eqref{SAR_ANGLE_ERR_FULL} we obtain that  
\begin{align} \label{SAR_ANGLE_ERR_CONST_VEL}
&\mathrm{var}(\delta \theta)\approx \\ \nonumber
&\hspace{10mm}
\frac{
\p(\theta)^T \Bigl( \sigma_f^2 \boldsymbol{\Gamma} + \sigma_{\phi}^2 \boldsymbol{\Gamma} \G(\bphi)^T \D^2(\v,\bphi) \G(\bphi) \boldsymbol{\Gamma}\Bigr) \p(\theta)
}{\omega(N) | v_t(\theta) |^2}.
\end{align}

We turn now to simplify the second factor in \eqref{SAR_ANGLE_ERR_FULL}
(the numerator of \eqref{SAR_ANGLE_ERR_CONST_VEL}) under the assumption that the number of targets, $K$, is very large and their direction of arrivals $\{ \phi_i \}$ are independently drawn from the uniform distribution $U[-\pi/2,\pi/2]$.

Recalling the definition of $\bGamma$ in \eqref{GAMMA_EQ}, note that
\begin{align}
    &\lim_{K\to\infty} \frac{1}{K}\bGamma^{-1} = \lim_{K\to\infty} \frac{1}{K} \G(\bphi)^T \G(\bphi) \\ \nonumber
    &= \lim_{K\to\infty} \frac{4}{\lambda^2 K} \sum_{i=1}^K \p(\phi_i) \p^T(\phi_i) = \frac{4}{\lambda^2} \mathbb{E}_{\phi} [ \p(\phi) \p^T(\phi) ] \triangleq \frac{4}{\lambda^2} \R_{\p},
\end{align}
where we used the law of large numbers.
Therefore (via arithmetic of limits), the limit of the inverse of $\frac{1}{K}\bGamma^{-1}$ is given by
\begin{align}
    \lim_{K\to\infty} K \bGamma = \frac{\lambda^2}{4} \R_{\p}^{-1} = \frac{\lambda^2}{4} \left ( \mathbb{E}_{\phi} [ \p(\phi) \p^T(\phi) ] \right )^{-1}.
\end{align}
Similarly, recalling that $\D^2(\v,\bphi)$ is a diagonal matrix whose diagonal elements are given by
$|\frac{2}{\lambda} \p'^T(\phi_i)\v|^2 = | \frac{2}{\lambda} v_{t}(\phi_i) |^2$, we have
\begin{align}
  &\lim_{K\to\infty} \frac{1}{K} \G(\bphi)^T \D^2(\v,\bphi) \G(\bphi) \nonumber \\
  &\hspace{10mm} =  \lim_{K\to\infty} \frac{16}{\lambda^4 K} \sum_{i=1}^K v_{t}^2(\phi_i) \p(\phi_i) \p^T(\phi_i)  \nonumber \\
  &\hspace{10mm} = \frac{16}{\lambda^4} \mathbb{E}_{\phi} [ v_{t}^2(\phi) \p(\phi) \p^T(\phi) ] \triangleq \frac{16}{\lambda^4} \R_{|v_t|\p}(\v).
\end{align}
Based on these two limits, for large $K$ we have that \eqref{SAR_ANGLE_ERR_CONST_VEL} is well approximated by
\begin{align}
\label{Eq_asymp_approx}
    \mathrm{var}(\delta \theta) &\approx 
\frac{
\p(\theta)^T \left ( \sigma_f^2 \frac{\lambda^2}{4} \I_2 + \sigma_{\phi}^2 \R_{\p}^{-1} \R_{|v_t|\p}(\v)  \right ) \R_{\p}^{-1} \p(\theta)
}{K \omega(N) | v_t(\theta) |^2}.
\end{align}
Next, to establish a fully analytical expression, let us compute $\R_{\p}$ and $\R_{|v_t|\p}(\v)$:
\begin{align}
\label{Eq_Rg}
\R_{\p} &= \mathbb{E}_{\phi} [ \p(\phi) \p^T(\phi) ] 
= \frac{1}{2} \int_{-\pi/2}^{\pi/2} \frac{d\phi}{\pi}
 \begin{bmatrix}
          2 \sin^2{\phi} & \sin{2\phi} \\
          \sin{2\phi} & 2 \cos^2{\phi}
         \end{bmatrix} \nonumber \\ 
&= 
\frac{1}{2 \pi} 
         \begin{bmatrix}
          \pi & 0 \\
          0 & \pi
         \end{bmatrix}    = 
         \frac{1}{2}\I_2,
\end{align}
\begin{align}
\label{Eq_Rvg}
&\R_{|v_t|\p}(\v) = \mathbb{E}_{\phi} [ v_{t}^2(\phi) \p(\phi) \p^T(\phi) ] \\ \nonumber
&= \frac{1}{2} \int_{-\pi/2}^{\pi/2} \frac{d\phi}{\pi} ( v_x \cos{\phi} - v_y \sin{\phi} )^2
 \begin{bmatrix}
          2 \sin^2{\phi} & \sin{2\phi} \\
          \sin{2\phi} & 2 \cos^2{\phi}
         \end{bmatrix} \\ \nonumber
&= \frac{1}{2} 
         \begin{bmatrix}
          v_x^2/4 + 3 v_y^2/4 & -v_x v_y/2 \\
          -v_x v_y/2 & 3 v_x^2/4 + v_y^2/4
         \end{bmatrix}.           
\end{align}
Substituting \eqref{Eq_Rg} and \eqref{Eq_Rvg} into \eqref{Eq_asymp_approx} we get the following simplified approximate expression  
for the SAR angle error, 
which can be computed for any $\v=[v_x, v_y]^T$,
\begin{align}
    \label{Eq_asymp_approx2}
    &\mathrm{var}(\delta \theta) \approx \nonumber \\ 
    & 
\frac{
\p(\theta)^T \left ( \sigma_f^2 \lambda^2 \I_2 + \sigma_{\phi}^2   
\begin{bmatrix}
          v_x^2 + 3 v_y^2 & -2 v_x v_y \\
          -2 v_x v_y & 3 v_x^2 + v_y^2
         \end{bmatrix} 
\right ) \p(\theta)
}{2 K \omega(N) | v_t(\theta) |^2}.
\end{align} 
Observe that for the typical case where the vehicle moves only along the y-axis (see axis orientation in Fig.\ref{SYS_MODEL}), i.e., $v_x=0$, we further obtain
\begin{align}
\label{Eq_asymp_approx3}
\mathrm{var}(\delta \theta) &\approx 
\frac{
\p(\theta)^T 
         \begin{bmatrix}
          \sigma_f^2\lambda^2 + 3 \sigma_{\phi}^2 v_y^2   & 0 \\
          0 & \sigma_f^2\lambda^2 + \sigma_{\phi}^2 v_y^2 
         \end{bmatrix}
\p(\theta)
}{2 K \omega(N) | v_y \sin{{\theta}} |^2}
 \nonumber \\ 
&= 
\frac{
(\sigma_f^2\lambda^2 + \sigma_{\phi}^2 v_y^2)(\sin^2{\theta}+\cos^2{\theta}) + 2 \sigma_{\phi}^2 v_y^2 \sin^2{\theta}
}{2 K \omega(N) v_y^2 \sin^2{\theta}}
 \nonumber \\
&= 
\frac{
\sigma_f^2\lambda^2 + \sigma_{\phi}^2 v_y^2 (1+ 2 \sin^2{\theta})
}{2 K \omega(N) v_y^2 \sin^2{\theta}}.
\end{align}
From \eqref{Eq_asymp_approx3} it is realized that the SAR angular error is a function of six factors that are detailed next.

The first two factors are the antenna array's angle estimation variance, $\sigma_{\phi}^2$, and the Doppler estimation variance, $\sigma_{f}^2$ (both estimated per short frame). When these two variances increase the velocity estimation error increases and as a result the SAR angle estimation error increases. Note that $\sigma_{\phi}^2$ is inversely proportional to the antenna array aperture size (which is typically small for automotive radars), and $\sigma_{f}^2$ is inversely proportional to the frame duration, $T_f$. The frame duration is limited in order for the approximation of constant target range and speed per frame to hold. 

The third factor is the ego vehicle velocity, $v_y$. 
Note that \eqref{Eq_asymp_approx3} can be expressed as $\sigma_f^2 \frac{c_1}{v_y^2} + \sigma_{\phi}^2 c_2$, where $c_1, c_2>0$ are independent of $\v_y$. In words, the term that is associated with $\sigma_f^2$ reduces as a function of the ego-speed $|v_y|$ while the term, $c_2$, that is associated with $\sigma_\phi^2$ does not depend on $v_y$ (note that $v_y^2$ appears in both numerator and denominator of $c_2$). 
Therefore, increasing $|v_y|$ improves the accuracy of the SAR angle estimation until it reaches a plateau.
The reason for this somewhat surprising phenomenon can be explained by the two counteractive effects of increasing the ego-speed. 
On the one hand, a larger speed increases the synthetic aperture, which improves the accuracy of the SAR angle estimation for a fixed velocity error covariance (recalling the linear link between $\u$ and the velocity, it is realized that the order of the velocity in the denominator of \eqref{Eq_analysis_az_vel_var_simp} is larger than in the numerator). 
On the other hand, an increase in the true speed also amplifies the effect of the array's angular estimation error (whose variance is $\sigma_\phi^2$) on the velocity estimation error. This is realized by increasing the quantity $\D^2(\v,\bphi)$, which increases the velocity estimation covariance in \eqref{Eq_small_err_v5_simp}. 

The fourth factor is the number of targets, $K$. The velocity estimation error decreases when the number of (uniformly drawn) targets increases, and hence, the SAR angle estimation error decrease as well. 

The fifth factor is the number of frames, $N$. Since $\omega(N)$ is proportional to $N$, the SAR angle estimation error reduces with the increase in the number of frames. 
This follows from the fact that increasing $N$ essentially increases the integration interval and hence the synthetic aperture,
which improves the accuracy of the SAR angle estimation for a fixed per-frame velocity error covariance (recall that the velocity errors of the frames are assumed to be statistically independent). 
Note, though, that increasing 
$N$ also increases the latency of the system.

The sixth factor is the target angle. The SAR angle estimation error reduces as the target angle, $\theta$, increases from $0^{\circ}$ to $90^{\circ}$.
This is explained by the fact that \eqref{Eq_asymp_approx3} is derived for a setting where the $\theta$ is measured w.r.t.~the synthetic aperture axis (the y-axis in Fig.~\ref{SYS_MODEL}). 
Like in any antenna array, the sensitivity of the measurements to the target's angle increases in the synthetic array's boresight and decreases in the synthetic array's endfire, and this is aligned with the potential estimation accuracy.


\section{SAR Performance Evaluation}\label{RES_SEC}
In this section we evaluate the automotive SAR angular estimation accuracy by simulation and compare the simulation results with the analytical analysis that was derived in Section \ref{PERFORMANCE_ANALY_SEC}. The SAR was simulated by processing multiple frames as described in Section \ref{SYS_MODEL_SEC}. The individual frame duration was $20msec$, and the number of frames varied in the tests. For simplification, the ego vehicle had constant velocity. Yet, as this is not a requirement of the system, the velocity was estimated independently per frame from the detected reflection points of each frame. In each test scenario there were static reflection points with angles that were randomly spread (with uniform distribution) in the radar field of view of $[-90^{\circ}:90^{\circ}]$. The number of reflection points varied in the experiments. Each detected reflection point had an angle and Doppler estimation error that was Gaussian distributed with zero mean and standard deviation of $\sigma_{\phi}$ and $\sigma_f$, respectively. In all the tests $\sigma_f=50Hz=1/20ms$ corresponding to the frame duration of $20ms$. This is the typical automotive radar frame duration, in which the approximation of constant velocity within the frame holds. The ego vehicle velocity of each frame was estimated by minimizing \eqref{VEL_LS} (least squares estimation), and the SAR coherent integration was obtained from \eqref{SAR_OUT_5} given the velocity estimations. The signal to noise ratio at the matched filter output of each frame, $y_n(\gamma_n,\theta_n,\v_n)$, was $20 dB$.  

Fig.~\ref{SAR_RES_WITHOUT_ERR} shows the SAR angular resolution (the main-lobe width at 3 dB attenuation of the SAR output at \eqref{SAR_OUT_5}) as a function of the reflection point angle with respect to the vehicle motion direction ($\theta$ in Fig.~\ref{SYS_MODEL}) when the velocity is perfectly known (zero velocity error). The figure shows the SAR resolution for three different synthetic apertures, $1m$, $2.5m$ and $3m$. The angular resolution improves as the target angle increases, as well as when the synthetic aperture size increases. State-of-the-art automotive radars achieve a $1^{\circ}$ angular resolution with an array of antenna elements that has $10cm$ physical aperture (without SAR). From Fig.~\ref{SAR_RES_WITHOUT_ERR} it is realized that SAR can significantly improve this resolution when the velocity estimation has zero error and the synthetic aperture is larger than $1m$, or when the aperture has $1m$ but the reflection point angle is larger than $10^{\circ}$.

Next, we analyze the radar's velocity estimation accuracy. 
Fig.~\ref{VEL_ERROR_FIG} shows the velocity error standard deviation as a function of the antenna array angle estimation error standard deviation, $\sigma_{\phi}$, for different cases of ego-velocity and different number of targets. The solid curves display the error of the simulation tests and the dashed curves show the results of the analytical analysis expression in \eqref{Eq_small_err_v5_simp} (specifically, the square root of the trace of the covariance error matrix). It is realized from the figure, that the analytical analysis closely predicts the simulation results. The velocity estimation error increases with the increase in the ego vehicle velocity, and also increases as the number of reflection points reduces. 

We turn next to evaluate the implications of the velocity estimation error on the SAR angular estimation error. Figs.~\ref{SAR_ERROR_VS_ANGLE_VAR}-\ref{SAR_RES_NOF_FRAMES} present the SAR simulation angular root mean square error (RMSE) as a function of the target angle (solid lines referred to as 'Simulation') for different settings. The SAR simulation angular error is the difference between the angle of the SAR output peak (peak of \eqref{SAR_OUT_5}) and the true target angle. The figures also show the analytical analysis RMSE, which is given by the square root of \eqref{SAR_ANGLE_ERR_FULL} (dashed line referred to as 'Analysis'). In Fig.~\ref{SAR_ERROR_VS_ANGLE_VAR} the SAR RMSE is presented for different values of antenna array angular accuracy, $\sigma_{\phi}=1^{\circ},3^{\circ},10^{\circ}$ (corresponding to high-end,  middle-end, and low-end automotive radars). Fig.~\ref{SAR_RES_VS_ABS_VEL} shows the SAR RMSE for different ego vehicle velocities ($\v=3m/s,10m/s,25m/s$). In Fig.~\ref{SAR_RES_NOF_TARGETS} the SAR RMSE for different number of targets is shown ($K=2,5,10$). Finally, in Fig.~\ref{SAR_RES_NOF_FRAMES} the SAR performance for different number of frames is presented ($N=2,5,10$). 

In all the Figs.~\ref{SAR_ERROR_VS_ANGLE_VAR}-\ref{SAR_RES_NOF_FRAMES} the analytical analysis closely predicts the simulation results. In all figures it is observed that the SAR angular error reduces with the target angle. From Fig.~\ref{SAR_ERROR_VS_ANGLE_VAR} it is observed that the SAR RMSE increases with $\sigma_{\phi}$. From Fig.~\ref{SAR_RES_VS_ABS_VEL} it is realized that the SAR RMSE reduces with the increase in the ego vehicle velocity. The RMSE ratio between $v=10m/s$ and $v=3m/s$ is about $2$ and the RMSE ratio between $v=25m/s$ and $v=10m/s$ is about $1.2$, thus the RMSE reduction factor becomes smaller as the velocity increases. Figs.~\ref{SAR_RES_NOF_TARGETS}-\ref{SAR_RES_NOF_FRAMES} show that the SAR angle estimation RMSE reduces with the number of targets and the number of frames. All these phenomena agree and confirm the insights that were obtained from the analytical analysis in Section \ref{ANALYSIS_IMPLICATIONS}.

Next, we analyze the performance gain in the angle estimation of the SAR with respect to angle estimation with the antenna array. Fig.~\ref{RATIO_SIG_THETA_AND_SAR_RMSE} presents the ratio between the array antenna angle estimation RMSE, $\sigma_{\phi}$, and the analytical analysis SAR RMSE (square root of \eqref{SAR_ANGLE_ERR_FULL}), for the different system configurations that were presented in Figs.~\ref{SAR_ERROR_VS_ANGLE_VAR}-\ref{SAR_RES_NOF_FRAMES}. When this ratio is larger than one the SAR performance is better than the antenna array, and when this ratio is smaller than one then the SAR performs worse than the antenna array. In all the cases in the figure the antenna array angular RMSE was $\sigma_{\phi}=1^{\circ}$, which is the angular resolution of state-of-the-art automotive radars. It is realized from the figure that the conditions in which SAR attains a performance gain over state-of-the-art automotive radars are limited. When the ego-velocity is small ($3m/s$, green plot), or the number of targets is small ($K=2$, red plot), or the number of frames is small ($N=2$, brown plot) then the SAR does not obtain a significant performance improvement over the antenna array. Moreover, in these conditions the SAR obtains worse performance than the antenna array when the target angle is smaller than $10^{\circ}$.
For velocity of $10m/s$, 5 targets and 5 frames (black curve) the SAR attains a significant performance gain over the antenna array only for target angles above $15^{\circ}$, and attains a maximal gain factor of 3 for target angle above $40^{\circ}$. For higher velocities or large number of targets or larger number of frames the SAR attains a larger performance gain (blue, cyan and magenta plots).

Fig.~\ref{RATIO_SAR_RES_AND_SAR_RMSE} presents the SAR performance degradation due to the inaccurate velocity estimation, which is expressed by the ratio between the SAR RMSE and the SAR angular resolution when the velocity is known (the angular resolution as was calculated for Fig.~\ref{SAR_RES_WITHOUT_ERR}). The figure presents this ratio for the test cases in Fig.~\ref{RATIO_SIG_THETA_AND_SAR_RMSE}. When the ratio is significantly smaller than one then the velocity estimation error does not degrade the SAR performance significantly and the SAR reaches the potential performance of the extended aperture. It is observed from the figure that in most cases and in most target angles the SAR does not attain the extend aperture performance. For target angles below $20^{\circ}$ the SAR RMSE is significantly larger than the potential SAR resolution. For target angles above $50^{\circ}$ the SAR RMSE is smaller than the SAR resolution in most test cases.

Figs.~\ref{RATIO_SIG_THETA_AND_SAR_RMSE}-\ref{RATIO_SAR_RES_AND_SAR_RMSE} show that from all the tested configurations, the one with maximal number of targets (magenta plot with 15 targets) achieves maximal gain with respect to the antenna array and also attains the smallest performance gap from the extended aperture performance with known velocity. 
Increasing the SAR integration duration ($N$) also provides a relatively large performance gain with respect to the antenna array (cyan plot in Fig.~\ref{RATIO_SIG_THETA_AND_SAR_RMSE}), however this comes with the price of increasing the radar detection delay, which is a major issue in collision risk situation. We deduce from the figures in this section that applying SAR in automotive scenarios is mostly beneficial in the following conditions: 
\begin{enumerate}
\item Crowded urban environments with a large number of static reflection points.
\item Moderate to high ego vehicle speed ($\|\v\|>10m/s$).
\item Target angles with offset from the driving direction ($\theta>10^{\circ}$).
\end{enumerate}
When these conditions are not met the SAR performance benefit for automotive applications is insignificant and possibly is not worth the increase in processing complexity and increase in the detection delay that comes with its implementation.

\begin{figure}
  \centering
  \begin{subfigure}[b]{1\linewidth}
    \centering\includegraphics[width=260pt]{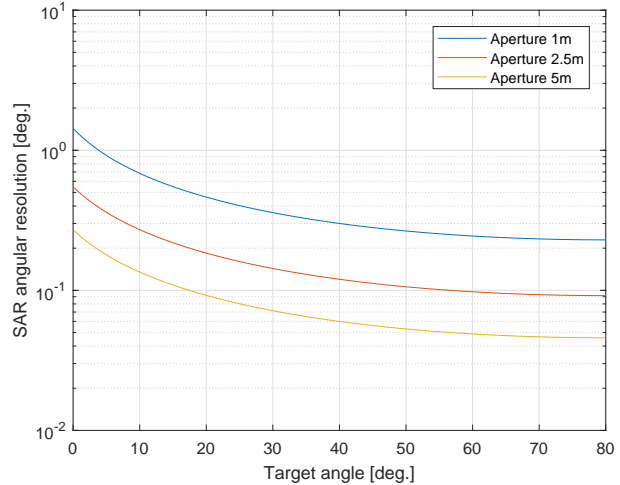}    
  \end{subfigure}
  \caption{SAR angular resolution (the $3dB$ beam-width) when velocity estimation has zero error.}
\label{SAR_RES_WITHOUT_ERR}
\end{figure}
\begin{figure}
  \centering
  \begin{subfigure}[b]{1\linewidth}
    \centering\includegraphics[width=260pt]{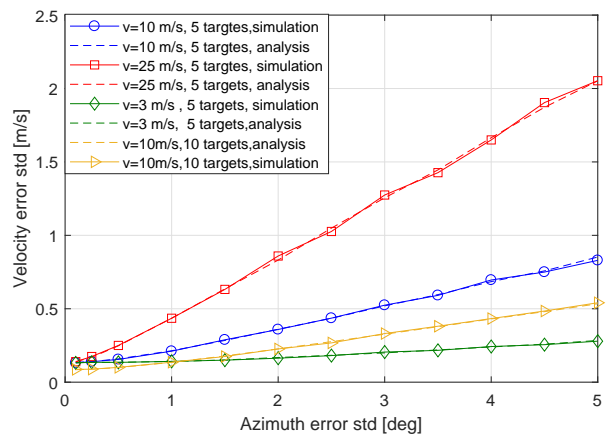}    
  \end{subfigure}
  \caption{Velocity estimation error.}
\label{VEL_ERROR_FIG}
\end{figure}

\begin{figure}
  \centering
  \begin{subfigure}[b]{1\linewidth}
    \centering\includegraphics[width=260pt]{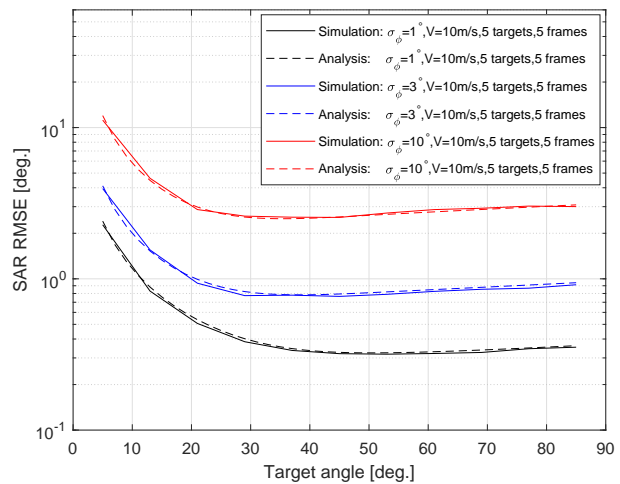}
  \end{subfigure}
  \caption{SAR angle estimation RMSE for different physical antenna array angle estimation RMSE ($\sigma_{\phi}$)}
\label{SAR_ERROR_VS_ANGLE_VAR}
\end{figure}

\begin{figure}
  \centering
  \begin{subfigure}[b]{1\linewidth}
    \centering\includegraphics[width=260pt]{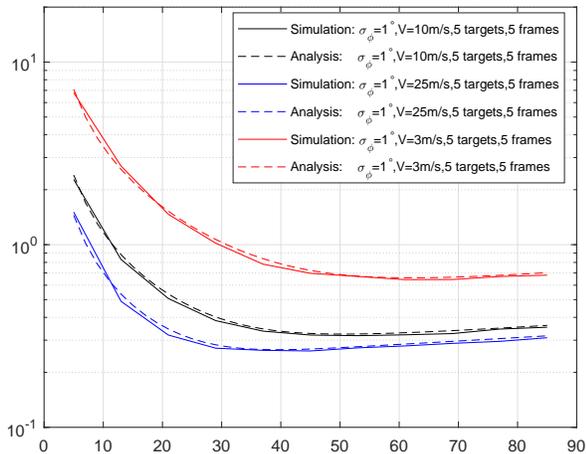}    
  \end{subfigure}
  \caption{SAR angle estimation RMSE for different ego vehicle velocities.}
\label{SAR_RES_VS_ABS_VEL}
\end{figure}

\begin{figure}
  \centering
  \begin{subfigure}[b]{1\linewidth}
    \centering\includegraphics[width=260pt]{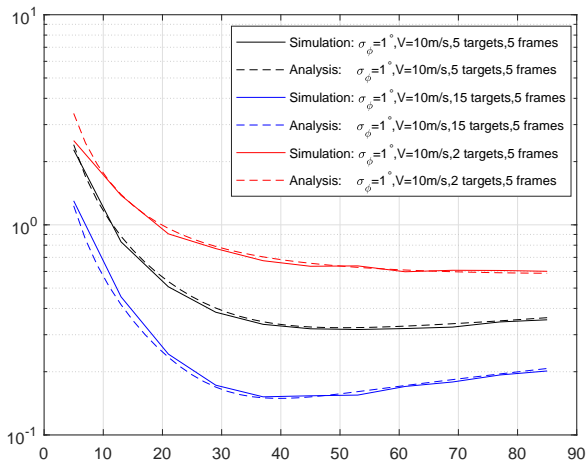}
  \end{subfigure}
  \caption{SAR angle estimation RMSE for different number of targets (number of reflection points)}
\label{SAR_RES_NOF_TARGETS}
\end{figure}

\begin{figure}
  \centering
  \begin{subfigure}[b]{1\linewidth}
    \centering\includegraphics[width=260pt]{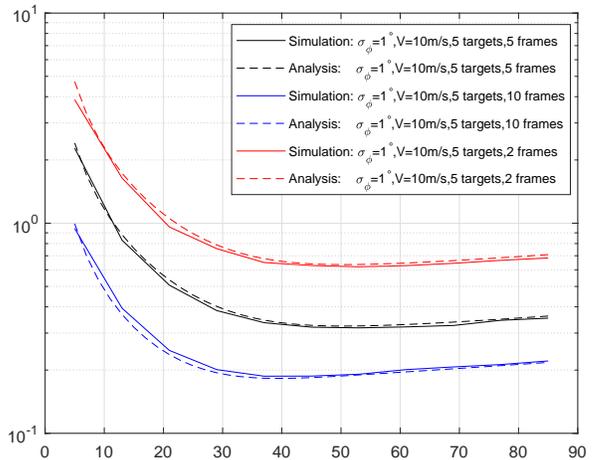} 
  \end{subfigure}
  \caption{SAR angle estimation RMSE for different number of frames (different integration duration)}
\label{SAR_RES_NOF_FRAMES}
\end{figure}

\begin{figure}
  \centering
  \begin{subfigure}[b]{1\linewidth}
    \centering\includegraphics[width=260pt]{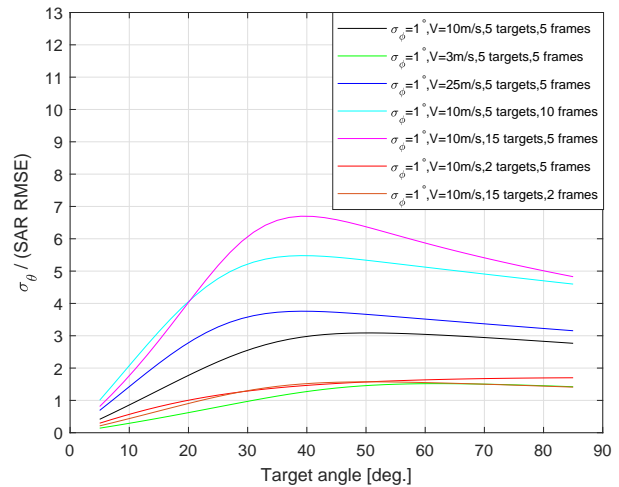}  
  \end{subfigure}
  \caption{SAR performance gain with respect to the physical antenna array. Ratio between the physical antenna array angle estimation RMSE ($\sigma_{\phi}$) and the SAR angle estimation RMSE (given by the square root of \eqref{SAR_ANGLE_ERR_FULL})}
\label{RATIO_SIG_THETA_AND_SAR_RMSE}
\end{figure}

\begin{figure}
  \centering
  \begin{subfigure}[b]{1\linewidth}
    \centering\includegraphics[width=260pt]{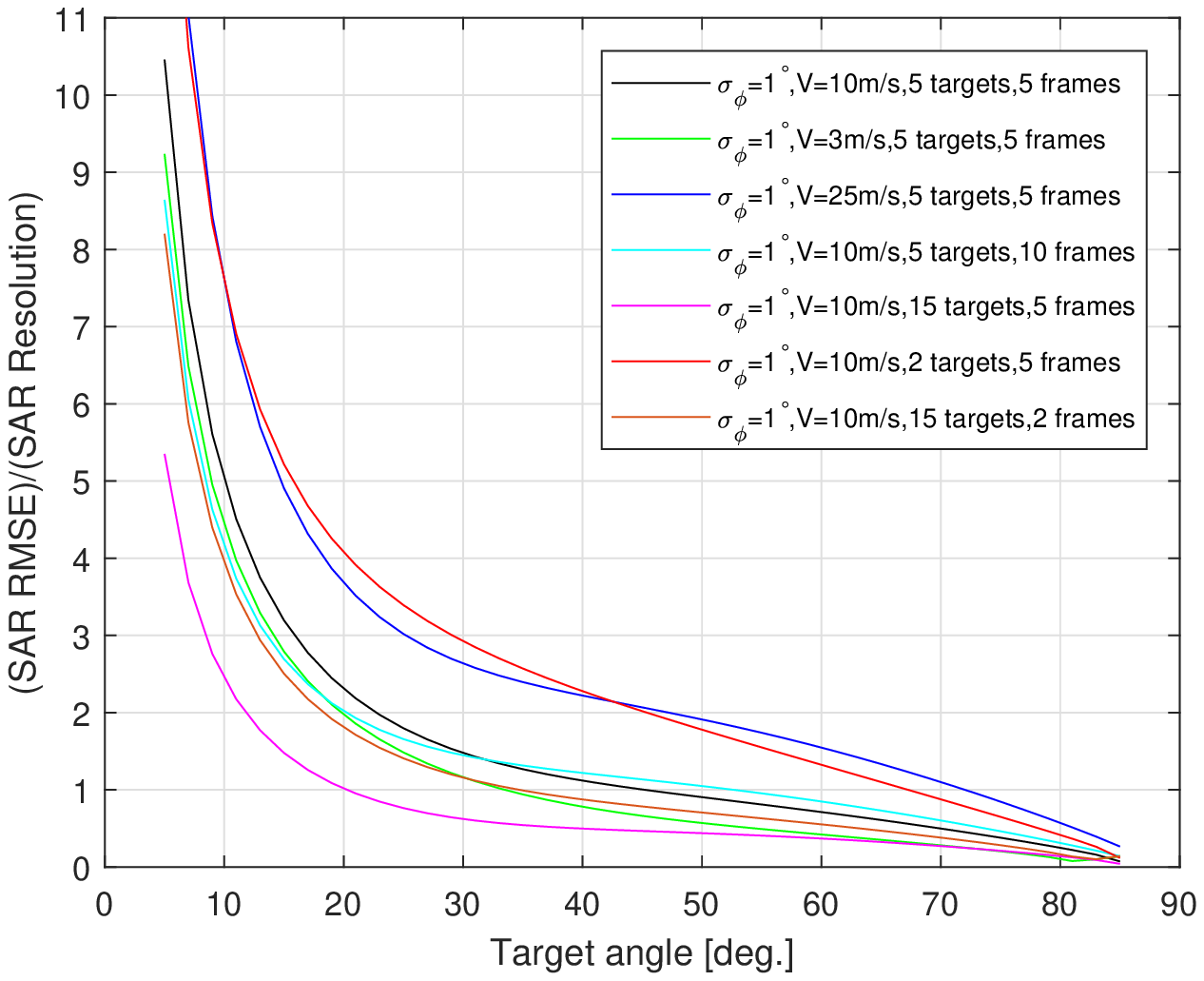}  
  \end{subfigure}
  \caption{SAR performance degradation due to velocity estimation error. Ratio between SAR angle estimation RMSE (square root of \eqref{SAR_ANGLE_ERR_FULL}) and the SAR angular resolution (the $3dB$ beam-width) when the velocity is known.}
\label{RATIO_SAR_RES_AND_SAR_RMSE}
\end{figure}

\section{Conclusion}\label{CONC_SEC}
In this paper, we analyzed the angle estimation error of automotive SAR, which is due to the error in the velocity estimation, where the velocity estimation is obtained from the detections of the radar's physical antenna array. An analytical expression for the angle estimation error of SAR was derived, which closely predicts the simulation performance, and provides insights on the limitations of SAR for automotive applications. It was realized from the analytical analysis that SAR attains a significant performance gain when there is a large number of reflection points (with the same velocity), the ego vehicle speed is moderate to high, and the target angle has at least $10^{\circ}$ offset from the driving direction. 
These conditions are apparent when driving in urban environments that are crowded with static objects. In other driving scenarios such as highway with small number of reflections, or parking at low speed, these conditions may not hold. In this case, the SAR performance gain is small and may not be worth the increase in complexity and detection delay that comes from SAR. Moreover, when there are only few targets, and the driving speed is very low the SAR performs even worse than the physical antenna array.

The limitations of SAR for the automotive applications that were highlighted in this paper may possibly be overcome by utilizing auxiliary sensor such as very accurate GNSS and IMU sensors. In future work we suggest to leverage the work of this paper for analyzing the automotive SAR performance for systems that estimate the velocity with a hybrid of radar and auxiliary sensors.

\appendices

\section{Proof of Lemma~\ref{lemma_1_over_N}}
\label{app_lemma}

To show that 
$\|\L_N^T \bPhi_N \L_N \1_N\|_2^2 = \Theta(N^5)$ (i.e., that the highest order of $N$ in this expression is 5) and $\|\bPhi_N \L_N \1_N\|_2^4 = \Theta(N^6)$,
we express each of these terms as a polynomial of $N$.
We do it here for odd values of $N$, but it is straightforward to establish this result in a similar way also for even values of $N$. 

We start with $\|\bPhi_N \L_N \1_N\|_2^4$. 
Recall that $\L_N \1_N$ is a vector of entries $1,...,N$ and that applying $\bPhi_N$ on $\L_N \1_N$ subtracts the mean $\frac{N+1}{2}$ from all its entries. Thus, we have that
\begin{align}
    \|\bPhi_N \L_N \1_N\|_2^4 &= \left ( \sum_{i=1}^N \left ( i - \frac{N+1}{2} \right )^2 \right )^2 \\ \nonumber
    &= \left ( 2 \sum_{i=1}^{\frac{N-1}{2}} i^2 \right )^2  \\ \nonumber
    &= 4 \left ( \frac{N\frac{N-1}{2}\frac{N+1}{2}}{6} \right )^2   \\ \nonumber
    &= \frac{N^2(N^2-1)^2}{144} = \Theta(N^6),
\end{align}
where the third equality follows from the well-known result on the sum of $1,2^2,...,n^2$.

We turn to handle $\|\L_N^T \bPhi_N \L_N \1_N\|_2^2$.
First, note that the $i^{th}$ element in $\L_N^T \bPhi_N \L_N \1_N$ is 
\begin{align}
    [\L_N^T \bPhi_N \L_N \1_N]_i &= \sum_{k=i}^N \left ( k - \frac{N+1}{2} \right ) \\ \nonumber
    &= \frac{(N+i)(N-i+1)}{2} + (N-i+1)\frac{N+1}{2} \\ \nonumber
    &= \frac{(N-i+1)(i-1)}{2},
\end{align}
where in the second equality we used the well-known result on the sum of $i,i+1,...,N$.
Equipped with the result, we have
\begin{align}
\label{Eq_app_lemma}
    &\|\L_N^T \bPhi_N \L_N \1_N\|_2^2 = \frac{1}{4} \sum_{i=1}^N \left ( (N+2)i - i^2 - (N+1) \right )^2= \\ \nonumber
    & \frac{1}{4} \sum_{i=1}^N \Big [ (N+2)^2 i^2 + i^4 + (N+1)^2 -2(N+1) i^3 \\ \nonumber
    &\hspace{5mm} - 2(N+1)(N+2)i + 2(N+1)i^2 \Big ]= \\ \nonumber
    & \frac{1}{4} \Big [ \big ((N+2)^2+2(N+1) \big ) \sum_{i=1}^N i^2 + \sum_{i=1}^N i^4 + N(N+1)^2  \\ \nonumber
    &\hspace{5mm} -2(N+1) \sum_{i=1}^N i^3 - 2(N+1)(N+2) \sum_{i=1}^N i \Big ].
\end{align}
Combining this expression with the following well-known results of series 
\begin{align*}
    \sum_{i=1}^N i &= \frac{N(N+1)}{2} \\ \nonumber
    \sum_{i=1}^N i^2 &= \frac{N(N+1)(2N+1)}{6} \\ \nonumber
    \sum_{i=1}^N i^3 &= \frac{N^2(N+1)^2}{4} \\ \nonumber
    \sum_{i=1}^N i^4 &= \frac{N(N+1)(2N+1)(3N^2+3N-1)}{30} \\ \nonumber    
\end{align*}
it can be seen that $N^5$ is the highest order of $N$ that appears in \eqref{Eq_app_lemma}, and the coefficient of $N^5$ in \eqref{Eq_app_lemma} is given by 
$$
\frac{1}{4} \left [ \frac{2}{6} + \frac{6}{30} - \frac{2}{4} \right ] = \frac{1}{120}.
$$
This implies that $\|\L_N^T \bPhi_N \L_N \1_N\|_2^2 = \Theta(N^5)$.

\end{document}